\newcommand{\rmv}[1]{}
\begin{document}
\title{Bounds on Covering Codes in RT spaces using Ordered Covering Arrays}
%
%
\author{Andr\'{e} Guerino Castoldi\inst{1}\orcidID{0000-0002-8601-4715} \and Emerson Luiz do Monte Carmelo\inst{2}\orcidID{0000-0002-5390-6901} \and Lucia Moura\inst{3}\orcidID{0000-0003-1763-2584} \and Daniel Panario\inst{4}\orcidID{ 0000-0003-3551-4063}  \and Brett Stevens\inst{4}\orcidID{0000-0003-4336-1773}}
\authorrunning{A.G. Castoldi, E.L. Monte Carmelo, L. Moura, D. Panario and B. Stevens}
%
\institute{{1} Departamento Acad\^{e}mico de Matem\'{a}tica,
Universidade Tecnol\'{o}gica Federal do Paran\'{a}, Pato Branco, Brazil,
\email{andrecastoldi@utfpr.edu.br}\\
{2} Departamento de Matem\'{a}tica, Universidade Estadual de Maring\'{a}, Brazil,
\email{elmcarmelo@uem.br}\\
{3} School of Electrical Engineering and Computer Science, University of Ottawa, Canada,
\email{lmoura@uottawa.ca}\\
{4} School of Mathematics and Statistics, Carleton University, Canada,
\email{\{daniel,brett\}@math.carleton.ca}
}
\maketitle              

\makebox[\linewidth]{\small April 2019}

\begin{abstract}
In this work, constructions of ordered covering arrays are discussed and applied to obtain new upper bounds on covering codes in Rosenbloom-Tsfasman spaces (RT spaces), improving or extending some previous results.

\keywords{Rosenbloom-Tsfasman metric \and Covering codes \and
Bounds on codes \and  Ordered covering arrays.}
\end{abstract}

\section{Introduction}

Roughly speaking, covering codes deal with the following problem: Given a metric space, how many balls are enough to cover all the space?
Several applications, such as data transmission, cellular telecommunications, decoding of errors, football pool problem, have motivated the study of covering codes in Hamming spaces. Covering codes also have connections with other branches of mathematics and computer science, such as finite fields, linear algebra, graph theory, combinatorial optimization, mathematical programming, and metaheuristic search. We refer the reader to the book by Cohen et al.~\cite{cohen1997covering} for an overview of the topic.

Rosenbloom and Tsfasman \cite{rosenbloom1997codes} introduced the RT metric on
linear spaces over finite fields, motivated by possible applications
to interference in parallel channels of communication systems.
Since the RT metric generalizes the Hamming metric, central concepts on codes in
Hamming spaces have been investigated in RT space, like perfect
codes, MDS codes, linear codes, distribution, packing and covering problems.

Most research on codes using  the RT metric focuses on packing codes; covering codes in RT spaces have not been much explored. Brualdi et al.~\cite{brualdi1995codes} implicitly investigated such codes  when the space is induced by a chain. The same class is mentioned as a function in \cite{yildiz2010covering}.  An extension to an arbitrary RT space is proposed in \cite{castoldi2015covering}, which deals mainly with upper bounds, inductive relations and some sharp bounds as well as relations with MDS codes.
More recently, \cite{castoldi2018partial} improved the sphere covering bound in RT spaces under some conditions by generalizing the excess counting method.
In this work, we explore upper bounds and inductive relations for covering codes in RT spaces by using ordered covering arrays (OCA), as briefly described below.

Ordered covering arrays (OCAs) are a generalization of ordered orthogonal arrays (OOA) and covering arrays (CA).  Orthogonal arrays are classical combinatorial designs with close connections to coding theory; see the book on orthogonal arrays by Hedayat et al.~\cite{hedayat2012orthogonal}. Covering arrays generalize orthogonal arrays and have been given a lot of attention due to their use in software testing and interesting connections with other combinatorial designs; see the survey paper by Colbourn~\cite{colbourn2004combinatorial}. Ordered orthogonal arrays are a generalization of orthogonal arrays introduced independently by
Lawrence \cite{lawrence1996combinatorial} and Mullen and Schmid \cite{mullen1996equivalence}, and are used in numerical integration. OCAs have been  introduced more recently by Krikorian~\cite{krikorian2011combinatorial}, generalizing several of the mentioned designs; their definition is given in Section 3.
In~\cite{krikorian2011combinatorial}, Krikorian gives recursive and Roux-type
constructions of OCAs as well as other constructions using the columns of a covering array and discusses an application of OCAs to numerical integration (evaluating multi-dimensional integrals).

In this paper, we apply OCAs to obtain new upper bounds on covering codes in RT spaces.
We review the basics on RT metric and covering codes in Section \ref{sec1}. CAs and OCAs are defined in Section \ref{sec2}. Section \ref{sec3} is devoted to recursive relations on the parameters of OCAs. Finally, in Section \ref{sec4}, we obtain upper bounds on covering codes in RT spaces from OCAs.

\section{Preliminaries: RT metric and covering codes} \label{sec1}

We review the RT metric based on \cite{brualdi1995codes}.
Let $P$ be a finite partial
ordered set (poset) and denote its partial order relation by
$\preceq$. A poset is a {\it chain} when any two elements
are comparable; a poset is an {\it anti-chain} when no two
distinct elements are comparable. A subset $I$ of $P$ is an {\it ideal} of $P$ when the following
property holds: if $b\in I$ and $a\preceq b$, then $a\in I$.
The {\it ideal generated} by a subset $A$ of $P$ is the ideal
of the smallest cardinality which contains $A$, denoted by
$\langle A\rangle $. An element $a\in I$ is {\it maximal in I}
if $a\preceq b$ implies that $b=a$. Analogously, an element
$a\in I$ is {\it minimal in I} if $b\preceq a$ implies that
$b=a$. A subset $J$ of $P$ is an {\it anti-ideal} of $P$ when it is
the complement of an ideal of $P$. If an ideal $I$ has
$t$ elements, then its corresponding anti-ideal has $n-t$
elements, where $n$ denotes the number of elements in $P$.

Let $m$ and $s$ be positive integers and $\Omega[m,s]$ be a
set of $ms$ elements partitioned into $m$ blocks $B_{i}$ having
$s$ elements each, where $B_{i}=\{b_{is},\ldots, b_{(i+1)s-1}\}$
for $i=0,\ldots,m-1$ and the elements of each block are ordered
as $b_{is} \preceq b_{is+1} \preceq \cdots \preceq b_{(i+1)s-1}$.
The set $\Omega[m,s]$ has a structure of a poset: it is the
union of $m$ disjoint chains, each one having $s$ elements,
which is known as the \emph{Rosenbloom-Tsfasman poset
$\Omega[m,s]$}, or briefly an RT poset $\Omega[m,s]$. When
$\Omega[m,s]=[m\times s]:=\{1,\ldots,ms\}$, the RT poset
$\Omega[m,s]$ is denoted by RT poset $[m\times s]$ and its
blocks are $B_{i}=\{i+1,\ldots, (i+1)s\}$, for $i=0,\ldots,m-1$.

For $1 \leq i \leq ms$ and $1 \leq j \leq \min\{m,i\},$ the
parameter $\Omega_{j}(i)$ denotes the number of ideals of the
RT poset $[m\times s]$ whose cardinality is $i$ with exactly $j$
maximal elements. In \cite[Proposition 1]{castoldi2018partial},
it is shown that $\Omega_{j}(i)={m\choose j}{i-1 \choose j-1}$
if $j\leq i \leq s$.

The {\it RT distance} between $x=(x_{1},\ldots,x_{ms})$ and
$y=(y_{1},\ldots,y_{ms})$ in $\mathbb{Z}_{q}^{ms}$ is defined
as \cite{brualdi1995codes}
$$d_{RT}(x,y)=|\langle supp(x-y)\rangle |=|\langle \{i: x_{i}\neq y_{i}\} \rangle |.$$
A set $\mathbb{Z}_{q}^{ms}$ endowed with the RT distance is a
{\it Rosenbloom-Tsfasman space}, or simply, an {\it RT space}.

The RT sphere centered at $x$ of radius $R$, denoted by
$B^{RT}(x,R)=\{y\in \mathbb{Z}_{q}^{ms}: d_{RT}(x,y)\leq R\},$
has cardinality given by the formula
\begin{align}
V_{q}^{RT}(m,s,R)=1+\sum_{i=1}^{R}\sum_{j=1}^{\min\{m,i\}}
    q^{i-j}(q-1)^{j}\Omega_{j}(i). \label{lenrt}
\end{align}
As expected, the case $s=1$ corresponds to the classical Hamming
sphere. Indeed, each subset produces an ideal formed by minimal
elements of the anti-chain $[m\times 1]$, thus the parameters
$\Omega_{i}(i)={m \choose i}$ and $\Omega_{j}(i)=0$ for $j<i$ yield
\begin{align}
\label{bola}
V_{q}(m,R)=V_{q}^{RT}(m,1,R)=1+\sum_{i=1}^{R}(q-1)^{i}{m \choose i}.
\end{align}

In contrast with the Hamming space, the computation of the sum
in Eq.~\eqref{lenrt} is not a feasible procedure for a general
RT space. In addition to the well studied case $s=1$, it is known that
$V_{q}^{RT}(1,s,R)=q^R$ for a space induced by a chain
$[1\times s]$, see \cite[Theorem 2.1]{brualdi1995codes} and
\cite{yildiz2010covering}. Also, it is proved in
\cite[Corollary 1]{castoldi2018partial} that, for $R\leq s$,
$$V_{q}^{RT}(m,s,R)=1+\sum_{i=1}^{R}\sum_{j=1}^{\min\{m,i\}}
q^{i-j}(q-1)^{j}{m\choose j}{i-1\choose j-1}.$$

We now define covering codes in an arbitrary RT space, and
refer the reader to \cite{castoldi2015covering} for an overview.

\begin{definition}
Given an RT poset $[m\times s]$, let $C$ be a subset of
$\mathbb{Z}_{q}^{ms}$. The code $C$ is an $R$-{\it covering} of
the RT space $\mathbb{Z}_{q}^{ms}$ if
for every $x\in \mathbb{Z}_{q}^{ms}$ there is a codeword
$c\in C$ such that $d_{RT}(x,c)\leq R$, or equivalently,
\[
\bigcup_{c\in C}B^{RT}(c,R)=\mathbb{Z}_{q}^{ms}.
\]
The number $K_{q}^{RT}(m,s,R)$ denotes the smallest cardinality
of an $R$-covering of the RT space $\mathbb{Z}_{q}^{ms}$.
\end{definition}

In particular, $K_{q}^{RT}(m,1,R)=K_{q}(m,R).$  The sphere
covering bound and a general upper bound are stated below.

\begin{proposition}(\cite[Propositions 6 and 7]{castoldi2015covering})\label{triviais}
For every $q\geq 2$ and $R$ such that $0<R<ms$,
$$\dfrac{q^{ms}}{V_{q}^{RT}(m,s,R)} \leq K_{q}^{RT}(m,s,R)
\leq q^{ms-R}.$$
\end{proposition}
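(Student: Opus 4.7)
The proposition has two independent bounds, so I would split the proof accordingly.

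For the lower bound, I would use the standard sphere-covering counting argument. Since RT balls are translations of the ball at the origin, every ball $B^{RT}(c,R)$ in $\mathbb{Z}_q^{ms}$ has the same cardinality $V_q^{RT}(m,s,R)$ given by Eq.~\eqref{lenrt}. If $C$ is an $R$-covering, then
\[
q^{ms} = \Bigl|\bigcup_{c\in C} B^{RT}(c,R)\Bigr| \leq |C|\cdot V_q^{RT}(m,s,R),
\]
which rearranges to the claimed inequality $|C|\geq q^{ms}/V_q^{RT}(m,s,R)$. This is essentially one line and should not present any obstacle.

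For the upper bound, the plan is to exhibit an explicit $R$-covering code $C$ with $|C|=q^{ms-R}$. The key observation is that the RT distance is controlled by the ideal generated by the support of the difference, so we want to choose a set $I$ of $R$ coordinates that is itself already an ideal of $[m\times s]$. Concretely, since $0<R<ms$, I would fix an ideal $I\subseteq [m\times s]$ with $|I|=R$; such an ideal exists because we can take the first $\lfloor R/s\rfloor$ complete chains together with the $R-s\lfloor R/s\rfloor$ smallest elements of the next chain, and one checks from the definition of block/chain ordering that this set is downward-closed. Let $T=[m\times s]\setminus I$, an anti-ideal of size $ms-R$, and define
\[
C=\{c\in\mathbb{Z}_q^{ms}:c_i=0 \text{ for all } i\in I\},
\]
so $|C|=q^{|T|}=q^{ms-R}$.

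To verify that $C$ is an $R$-covering, I would take an arbitrary $x\in\mathbb{Z}_q^{ms}$ and construct $c\in C$ by setting $c_i=x_i$ for $i\in T$ and $c_i=0$ for $i\in I$. Then $\mathrm{supp}(x-c)\subseteq I$, and because $I$ is already an ideal, $\langle \mathrm{supp}(x-c)\rangle\subseteq I$, whence $d_{RT}(x,c)=|\langle \mathrm{supp}(x-c)\rangle|\leq |I|=R$. This gives the desired inequality. There is no real obstacle here; the only subtle point is the existence of an ideal of the prescribed size $R$, which is settled by the block decomposition of $[m\times s]$ into chains.
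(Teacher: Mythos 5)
Your proof is correct: the sphere-covering lower bound and the construction of a code vanishing on an ideal of size $R$ are exactly the standard arguments, and the latter matches the construction the paper itself uses implicitly elsewhere (e.g., the ``trivial upper bound'' code $C=\{c:\pi_I(c)=0\}$ in the proof of Theorem~\ref{twochains}). The paper only cites \cite[Propositions 6 and 7]{castoldi2015covering} without reproducing a proof, and your argument, including the careful check that an ideal of size $R$ exists whenever $0<R<ms$, fills that in correctly.
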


\section{Ordered covering arrays}\label{sec2}

In this section, we define an important combinatorial object for this paper. We start recalling two classical combinatorial structures.

\begin{definition}
Let $t$,  $v$, $\lambda$, $n$, $N$ be positive integers
 and $N\geq\lambda v^{t}$. Let $A$
be an $N\times n$ array over an alphabet $V$ of size $v$. An
$N\times t$ subarray of $A$ is {\bf $\lambda$-covered} if it
has each $t$-tuple over $V$ as a row at least $\lambda$ times.
A set of $t$ columns of $A$ is {\bf $\lambda$-covered} if the $N\times t$
subarray of $A$ formed by them is $\lambda$-covered; when $\lambda = 1$ we simply say it is covered.
\end{definition}

In what follows, whenever $\lambda=1$, we omit $\lambda$ from the notation.

\begin{definition}(CA and OA)
Let $N$, $n$, $v$ and $\lambda$ be positive integers such that $2\leq t \leq n$. A  {\bf covering array} $CA_{\lambda}(N;t,n,v)$ is an $N\times n$ array $A$ with entries from a set $V$ of size $v$  such that any $t$-set of columns of $A$ is $\lambda$-covered. The parameter $t$ is the {\bf strength} of the covering array. The {\bf covering array number}  $CAN_{\lambda}(t,n,v)$ is the smallest positive integer $N$ such that a $CA_{\lambda}(N;t,n,v)$ exists. An {\bf orthogonal array} is a covering array with $N=\lambda v^t$.
\end{definition}

We are now ready to introduce ordered covering arrays.

\begin{definition} (OCA and OOA)
Let $t$, $m$, $s$, $v$ and $\lambda$ be positive integers such
that $2\leq t \leq ms$. An {\bf ordered covering array}
$OCA_{\lambda}(N;t,m,s,v)$ is an $N\times ms$ array $A$ with
entries from an alphabet $V$ of size $v$, whose columns are
labeled by an RT poset $\Omega[m,s]$, satisfying the property:
for each anti-ideal $J$ of the RT poset $\Omega[m,s]$ with $|J|=t$,
the set of columns of $A$ labeled by $J$ is $\lambda$-covered.
The parameter $t$ is the {\bf strength} of the ordered covering array.
The {\bf ordered covering array number}  $OCAN_{\lambda}(t,m,s,v)$  is the
smallest positive integer $N$ such that there exists an $OCA_{\lambda}(N;t,m,s,v)$.
An {\bf ordered orthogonal array} is an ordered covering array with $N=\lambda v^t$.
\end{definition}

\begin{remark}
Ordered covering arrays are special cases of variable strength covering arrays \cite{raaphorst2013variable,raaphorst2018variable}.
Ordered covering arrays were first studied by Krikorian~\cite{krikorian2011combinatorial}.
\end{remark}

\begin{example}\label{ocaexample}
The following array is an OCA of strength 2 with 5 rows:
\[
OCA(5;2,4,2,2)=
\begin{array}{c}
\begin{array}{cc|cc|cc|cc} \ 1 & 2 & 3 & 4 & 5 & 6 & 7 & 8\hspace{1.5mm}  \ \\
\end{array}\\
\left[
\begin{array}{cc|cc|cc|cc}
0 & 1 & 0 & 1 & 0 & 1 & 0 & 1  \\
1 & 1 & 1 & 0 & 0 & 0 & 0 & 0  \\
0 & 0 & 1 & 1 & 1 & 0 & 1 & 0  \\
1 & 0 & 0 & 0 & 1 & 1 & 0 & 0  \\
0 & 0 & 0 & 0 & 0 & 0 & 1 & 1
\end{array}
\right].
\end{array}
\]
The columns of this array are labeled by $[4\times 2]=\{1,\ldots,8\}$
and the blocks of the RT poset $[4 \times 2]$ are $B_{0}=\{1,2\}$,
$B_{1}=\{3,4\}$, $B_{2}=\{5,6\}$ are $B_{3}=\{7,8\}$. We have ten
anti-ideals of size 2, namely,
$$ \{1, 2\}, \{3, 4\}, \{5, 6\}, \{7, 8\}, \{2, 4\},
   \{2, 6\}, \{2, 8\}, \{4, 6\}, \{4, 8\}, \{6, 8\}.$$
The $5\times 2$ subarray constructed from each one of
theses anti-ideals covers all the pairs $(0,0)$, $(0,1)$, $(1,0)$
and $(1,1)$ at least once.
\end{example}

In an $OCA_{\lambda}(N;t,m,s,v)$ such that
$s>t$, each one of the first $s-t$ elements of a block in the RT
poset $\Omega[m,s]$ is not an element of any anti-ideal of size $t$.
 Therefore, a column labeled by one of these elements will not be part of any $N\times t$ subarray that must be $\lambda$-covered in an OCA.
 So we assume $s\leq t$ from now on.

 Two trivial relationships between
the ordered covering array number and the covering array
number $CAN_{\lambda}(t,n,v)$ are:
\begin{enumerate}
\item[(1)] $\lambda v^{t}\leq OCAN_{\lambda}(t,m,s,v) \leq
CAN_{\lambda}(t,ms,v) $;
\item[(2)] For $t\leq m$, $CAN_{\lambda}(t,m,v)\leq
OCAN_{\lambda}(t,m,t,v)$.
\end{enumerate}

If $\lambda=1$, we just write $OCA(N;t,m,s,v)$. We observe that if $N=\lambda v^t$,
an $OCA_{\lambda}(\lambda v^{t};t,m,s,v)$ is an ordered orthogonal
array $OOA_{\lambda}(\lambda v^{t};t,m,s,v)$. When $s=1$, an
$OCA_{\lambda}(N;t,m,1,v)$ is the well-known covering array
$CA_{\lambda}(N;t,m,v)$.

\section{Recursive relations for ordered covering arrays}\label{sec3}

In this section, we show recursive relations for ordered coverings arrays.

\begin{proposition} \label{prop8}
\begin{enumerate}
\item[(1)] If there exists an $OCA_{\lambda}(N;t,m,t-1,v)$,
then there exists an $OCA_{\lambda}(N;t,m,t,v)$.
\item[(2)] If  there exists an
$OCA_{\lambda}(N;t,m,s,v)$, then there exists an
$OCA_{\lambda}(N;t,m,s-1,v)$.
\item[(3)] If  there exists an
$OCA_{\lambda}(N;t,m,s,v)$, then there exists an
$OCA_{\lambda}(N;t,m-1,s,v)$.
\end{enumerate}
\end{proposition}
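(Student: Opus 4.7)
The plan is to dispatch parts~(2) and~(3) by simply deleting columns, and to handle part~(1) by appending one new column per block with a carefully chosen fill‑in.

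For part~(3), I would delete all $s$ columns belonging to a single block of the given $OCA_\lambda(N;t,m,s,v)$; the remaining $N\times(m-1)s$ array is naturally labelled by $\Omega[m-1,s]$, and every size‑$t$ anti‑ideal of $\Omega[m-1,s]$ is a size‑$t$ anti‑ideal of $\Omega[m,s]$ that avoids the deleted block, hence already $\lambda$‑covered. For part~(2), I would delete from each block the column corresponding to its minimum element, producing an $N\times m(s-1)$ array on $\Omega[m,s-1]$; since anti‑ideals are upward‑closed, every size‑$t$ anti‑ideal of $\Omega[m,s-1]$ is, under the natural identification of columns, still a size‑$t$ anti‑ideal of $\Omega[m,s]$ and therefore $\lambda$‑covered.

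Part~(1) is the substantive case. Starting from $A=OCA_\lambda(N;t,m,t-1,v)$, I would append one new column $c'_i$ to each block $B_i$ as the new minimum element of that block, obtaining an $N\times mt$ array on $\Omega[m,t]$. I would classify the size‑$t$ anti‑ideals of $\Omega[m,t]$ into two types: (A) a whole block $B_i$ (the only way to realize a length‑$t$ suffix inside a block), and (B) a union of proper suffixes of several blocks, each of length at most $t-1$. Type‑B anti‑ideals involve only the original top‑$(t-1)$ columns of each block and are in bijection with size‑$t$ anti‑ideals of $\Omega[m,t-1]$, so they are $\lambda$‑covered by $A$ automatically; the entire work therefore lies in type~A, which is the main obstacle.

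For type~A I need to choose $c'_i$ so that the subarray on the original $t-1$ columns of $B_i$ together with $c'_i$ is $\lambda$‑covered. The critical step is to observe that the original $B_i$ is already $(\lambda v)$‑covered of strength $t-1$: the admissibility inequality $t\leq m(t-1)$ forces $m\geq 2$ whenever $t\geq 2$, so I can enlarge $B_i$ to a size‑$t$ anti‑ideal of $\Omega[m,t-1]$ by adjoining the top element of any other block; the $\lambda$‑coverage of the resulting $t$‑column subarray in $A$, summed over the $v$ possible values of the adjoined column, shows that every $(t-1)$‑tuple on $B_i$ appears in at least $\lambda v$ rows of $A$. Given this slack, I would build $c'_i$ by partitioning the rows according to the $(t-1)$‑tuple they induce on $B_i$ and, within each class (of size at least $\lambda v$), distributing the $v$ symbols of $V$ so that each symbol is used at least $\lambda$ times. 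This forces every $t$‑tuple on $B_i\cup\{c'_i\}$ to appear at least $\lambda$ times, and repeating the construction independently for $i=0,\ldots,m-1$ produces the desired $OCA_\lambda(N;t,m,t,v)$.
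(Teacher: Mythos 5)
Your treatment of parts (2) and (3) by column deletion is exactly the paper's, and your case split for part (1) (a size-$t$ anti-ideal of $\Omega[m,t]$ is either an entire block or a union of proper suffixes, the latter being handled automatically by the given array) mirrors the paper's analysis. Where you genuinely diverge is in how you realize the new minimum column of each block. The paper never builds a new column: it maps $\min(B_i)$ to $\pi(\max(B_i))$ for a derangement $\pi$ of the set of block maxima, i.e.\ it \emph{duplicates} the column of $A'$ labeled by the top of some other block $B_j$, so that the whole-block anti-ideal $B_i$ of $\Omega[m,t]$ becomes the anti-ideal $(B_i\setminus\{\min(B_i)\})\cup\{\max(B_j)\}$ of $\Omega[m,t-1]$, which is already $\lambda$-covered. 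You instead synthesize a fresh column: you first prove that the $t-1$ old columns of $B_i$ are $(\lambda v)$-covered (by adjoining $\max(B_j)$ and summing over its $v$ symbols -- this is correct, and your derivation of $m\geq 2$ from $t\leq m(t-1)$ is the same fact the paper needs for the derangement to exist), and then redistribute symbols within each fiber of size at least $\lambda v$. Both arguments are valid, and the independence of the per-block constructions holds in both because no size-$t$ anti-ideal contains two block minima. The paper's version is slightly slicker and yields the extra information that the strength-$t$ OCA on $\Omega[m,t]$ can be taken with columns drawn from the given array; your version isolates a reusable principle (a $(\lambda v)$-covered set of $t-1$ columns can always be extended by one column to a $\lambda$-covered set of $t$ columns) at the cost of an explicit symbol-assignment step.
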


\begin{proof}
Parts (2) and (3) are easily obtained by deletion of appropriate columns of the $OCA_{\lambda}(N;t,m,s,v)$.
We prove part (1). Let $P$ be an  RT poset $\Omega[m,t]$, and let
$B_0, B_1, \ldots, B_{m-1}$ be the blocks of $P$. Each $B_i$, $0\leq i <m$,   is a chain, and we
denote by $\min(B_i)$ and $\max(B_i)$ the minimum and maximum elements of $B_i$, respectively.
Let $M=\{\max(B_0), \ldots, \max(B_{m-1})\}$ and let $\pi$ be a derangement of $M$. Let $P'=P\setminus\{\min(B_0), \ldots, \min(B_{m-1})\}$,
an RT poset $\Omega[m,t-1]$,  and let $A'$ be an $OCA_{\lambda}(N;t,m,t-1,v)$ with columns labeled by $P'$. We take a map $f\colon P \rightarrow P'$
given by $f(x)=x$ if $x\in P'$ and $f(x)=\pi(\max(B_i))$, if $x=\min(B_i)$.
In Fig.~\ref{fg4}, we depict $P'$ (above) and  $P$ (below) where $\min(B_i)$ is labeled by $\overline{a}$ where $a=\pi(\max(B_i))$.
Construct an array $A$ with columns labeled by elements of $P$ by taking the column  of $A$ labeled by $x$ to be the column  of $A'$ labeled by $f(x)$.
Let $J$ be any anti-ideal of $P$ of size $t$ and let  $J'=f(J)$. Then either $J=J'\subseteq P'$, or $\min(B_i)\in J$ for some $i$, which implies $J=B_i$ and $J'=(B_i \setminus\{\min(B_i)\})\cup \max(B_j)$ for some $j\not=i$. In either case, $J'$ is an anti-ideal of $P'$, and so the set of $t$ columns of $A'$ corresponding to $J'$ is $\lambda$-covered. Therefore, the set of $t$ columns of $A$ corresponding to $J$ is  $\lambda$-covered, for any anti-ideal $J$ of $P$ of size $t$, and $A$ is an $OCA_{\lambda}(N;t,m,t,v)$. \hfill $\Box$
\end{proof}
\begin{figure}[t]
\begin{tabular}{|ccc|} \hline
\ \ \ \ &
    \begin{tikzpicture}[scale=.9]
    \draw[fill] (-1,0) circle (.05cm) node[right] {$1$};
    \draw[fill] (-1,1) circle (.05cm) node[right] {$2$};
    \draw[fill] (-1,2) circle (.05cm) node[right] {$t-1$};
    \draw[fill] (1,0) circle (.05cm) node[right] {$t$};
    \draw[fill] (1,1) circle (.05cm) node[right] {$t+1$};
    \draw[fill] (1,2) circle (.05cm) node[right] {$2(t-1)$};
    \draw[fill] (5,0) circle (.05cm) node[right] {$(m-2)(t-1)+1$};
    \draw[fill] (5,1) circle (.05cm) node[right] {$(m-2)(t-1)+2$};
    \draw[fill] (5,2) circle (.05cm) node[right] {$(m-1)(t-1)$};
    \draw[fill] (8.5,0) circle (.05cm) node[right] {$(m-1)(t-1)+1$};
    \draw[fill] (8.5,1) circle (.05cm) node[right] {$(m-1)(t-1)+2$};
    \draw[fill] (8.5,2) circle (.05cm) node[right] {$m(t-1)$};
    \draw[fill] (3,1) circle (.02cm) {};
    \draw[fill] (3.25,1) circle (.02cm) {};
    \draw[fill] (3.5,1) circle (.02cm) {};
      \draw (-1,0) -- (-1,1);
      \draw[dotted] (-1,1) -- (-1,2);
      \draw (1,0) -- (1,1);
      \draw[dotted] (1,1) -- (1,2);
      \draw (5,0) -- (5,1);
      \draw[dotted] (5,1) -- (5,2);
      \draw (8.5,0) -- (8.5,1);
      \draw[dotted] (8.5,1) -- (8.5,2);
    \end{tikzpicture} & \ \ \ \\
    \hline
&
    \begin{tikzpicture}[scale=.9]
    \draw[fill] (-1,0) circle (.05cm) node[right] {$\overline{m(t-1)}$};
    \draw[fill] (-1,1) circle (.05cm) node[right] {$1$};
    \draw[fill] (-1,2) circle (.05cm) node[right] {$t-1$};
    \draw[fill] (1,0) circle (.05cm) node[right] {$\overline{t-1}$};
    \draw[fill] (1,1) circle (.05cm) node[right] {$t$};
    \draw[fill] (1,2) circle (.05cm) node[right] {$2(t-1)$};
    \draw[fill] (5,0) circle (.05cm) node[right] {$\overline{(m-2)(t-1)}$};
    \draw[fill] (5,1) circle (.05cm) node[right] {$(m-2)(t-1)+1$};
    \draw[fill] (5,2) circle (.05cm) node[right] {$(m-1)(t-1)$};
    \draw[fill] (8.5,0) circle (.05cm) node[right] {$\overline{(m-1)(t-1)}$};
    \draw[fill] (8.5,1) circle (.05cm) node[right] {$(m-1)(t-1)+1$};
    \draw[fill] (8.5,2) circle (.05cm) node[right] {$m(t-1)$};
    \draw[fill] (3,1) circle (.02cm) {};
    \draw[fill] (3.25,1) circle (.02cm) {};
    \draw[fill] (3.5,1) circle (.02cm) {};
      \draw (-1,0) -- (-1,1);
      \draw[dotted] (-1,1) -- (-1,2);
      \draw (1,0) -- (1,1);
      \draw[dotted] (1,1) -- (1,2);
      \draw (5,0) -- (5,1);
      \draw[dotted] (5,1) -- (5,2);
      \draw (8.5,0) -- (8.5,1);
      \draw[dotted] (8.5,1) -- (8.5,2);
    \end{tikzpicture} &\\ \hline
  \end{tabular}
\caption{Above, $P'$  is an RT poset $\Omega[m,t-1]$;  below, $P\supset P'$ is an RT poset $\Omega[m,t]$.}
\label{fg4}
\end{figure}
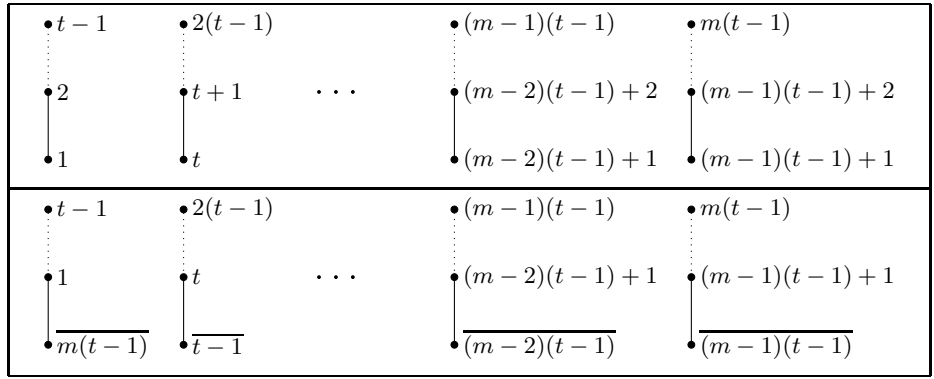

As a straightforward  consequence of Proposition \ref{prop8} items $(1)$ and $(2)$,  we have the following result.

\begin{corollary}
There exists an  $OCA_{\lambda}(N;t,m,t,v)$ if and only if there exists an $OCA_{\lambda}(N;t,m,t-1,v)$.
\end{corollary}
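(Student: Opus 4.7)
The plan is to observe that this corollary is purely a combination of the two implications already established in Proposition~\ref{prop8}, so the proof reduces to specializing parameters and running the two implications in opposite directions. No new combinatorial construction is needed.

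First, I would derive the forward direction $OCA_{\lambda}(N;t,m,t-1,v) \Rightarrow OCA_{\lambda}(N;t,m,t,v)$ by invoking Proposition~\ref{prop8}(1) verbatim; this is exactly the statement of that item, with the derangement-based padding of each block performing the lift from $s=t-1$ to $s=t$.

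Next, for the converse $OCA_{\lambda}(N;t,m,t,v) \Rightarrow OCA_{\lambda}(N;t,m,t-1,v)$, I would apply Proposition~\ref{prop8}(2) with $s=t$, which hands us an $OCA_{\lambda}(N;t,m,t-1,v)$ by deleting one appropriate column (namely, the minimum of each block) from each chain. Since this is already covered by the general deletion argument, no further work is required.

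The only subtlety worth mentioning is that Proposition~\ref{prop8}(2) as stated removes a single column from a chosen block, whereas here we want to drop down from $s=t$ to $s=t-1$ in every block simultaneously; however, applying item~(2) iteratively $m$ times (once per block, deleting the minimum of each block, which is always an element of the current poset that lies outside every anti-ideal of size~$t$ once $s \ge t$) produces the desired array. Since both directions are immediate from Proposition~\ref{prop8}, I expect no genuine obstacle; the write-up will essentially be two one-line citations of the earlier proposition, followed by the conclusion that the two implications together yield the ``if and only if'' statement.
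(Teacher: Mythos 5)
Your proof is correct and matches the paper exactly: the corollary is stated there as a straightforward consequence of Proposition~\ref{prop8}, items (1) and (2), applied in the two directions just as you describe. The ``subtlety'' you raise is not actually an issue, since item (2) as stated already passes from $s$ to $s-1$ (i.e., shortens every block by one) in a single application, so no iteration is needed.
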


By the above corollary, when $t=2$, the right hand side OCA has $s=t-1=1$ which corresponds to a covering array; therefore, in this case,
the ordered covering array number is equal to the covering array number. This also shows
that we need $t>2$ in order to have ordering covering arrays
essentially different than covering arrays.

\begin{example}\label{thm6}
$OCAN_{\lambda}(2,m,2,v)=CAN_{\lambda}(2,m,v).$

Let us label the columns of a $CA_{\lambda}(N;2,m,v)$ by the elements of $[m]=\{1,\ldots,m\}$.
Choose the columns of $CA_{\lambda}(N;2,m,v)$ labeled by the elements of $[m]$ given in Fig~\ref{fg1}.
We use again the notation $\overline{a}$ to duplicate $a \in[m]$ in the RT
poset $\Omega[m,2]$ in such a way that $a$ and $\overline{a}$ are not comparable, but the columns of $CA_{\lambda}(N;2,m,v)$ labeled by $a$ and $\overline{a}$ are equal.
\begin{figure}[h]
\centering
    \begin{tikzpicture}[scale=.9]
    \draw[fill] (-1,1) circle (.05cm) node[right] {$\overline{2}$};
    \draw[fill] (-1,2) circle (.05cm) node[right] {$1$};
    \draw[fill] (1,1) circle (.05cm) node[right] {$\overline{3}$};
    \draw[fill] (1,2) circle (.05cm) node[right] {$2$};
    \draw[fill] (3.5,1) circle (.05cm) node[right] {$\overline{m}$};
    \draw[fill] (3.5,2) circle (.05cm) node[right] {$m-1$};
    \draw[fill] (5.5,1) circle (.05cm) node[right] {$\overline{1}$};
    \draw[fill] (5.5,2) circle (.05cm) node[right] {$m$};
    \draw[fill] (2,1.5) circle (.02cm) {};
    \draw[fill] (2.25,1.5) circle (.02cm) {};
    \draw[fill] (2.5,1.5) circle (.02cm) {};
      \draw (-1,1) -- (-1,2);
      \draw (1,1) -- (1,2);
      \draw (3.5,1) -- (3.5,2);
      \draw (5.5,1) -- (5.5,2);
    \end{tikzpicture}
    \caption{Blocks of the RT poset $\Omega[m,2]$.}
    \label{fg1}
\end{figure}
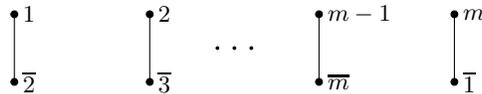

\end{example}

It is known from \cite{kleitman1973families} that $CAN(2,m,2)$
is the smallest positive integer $N$ such that $m\leq {N-1
\choose \lfloor \frac{N}{2}\rfloor -1}$. As a consequence,
 we obtain the following result.
\begin{corollary}
The ordered covering array number $OCAN(2,m,2,2)$ is
the smallest positive integer $N$ such that
$m\leq {N-1 \choose \lfloor \frac{N}{2}\rfloor -1}$.
\end{corollary}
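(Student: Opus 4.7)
The plan is to chain two already-available identities: the equivalence $OCAN_\lambda(2,m,2,v)=CAN_\lambda(2,m,v)$ established in Example~\ref{thm6}, and the Kleitman--Spencer characterization of $CAN(2,m,2)$ quoted immediately before the corollary. Specializing the first to $\lambda=1$ and $v=2$ gives
\[
OCAN(2,m,2,2) \;=\; CAN(2,m,2),
\]
so the corollary reduces to restating the Kleitman--Spencer bound for the right-hand side.

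For completeness, I would briefly justify both directions of the equivalence in the first step, since Example~\ref{thm6} only exhibits the construction showing $OCAN_\lambda(2,m,2,v)\le CAN_\lambda(2,m,v)$. The reverse inequality $CAN_\lambda(2,m,v)\le OCAN_\lambda(2,m,2,v)$ is exactly the trivial relation (2) stated earlier (with $t=2$), which says that deleting columns from an OCA down to one representative per block yields a CA of strength $2$ on $m$ columns. Combining the two inequalities gives the claimed equality.

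With this reduction in hand, the second step is a direct substitution: by \cite{kleitman1973families}, $CAN(2,m,2)$ equals the smallest $N$ with $m\le \binom{N-1}{\lfloor N/2\rfloor-1}$, which is precisely the bound appearing in the corollary. There is no real obstacle here; the only thing one must be careful about is making sure both inequalities in $OCAN(2,m,2,2)=CAN(2,m,2)$ are cited, rather than relying on the one-sided construction in Example~\ref{thm6} alone.
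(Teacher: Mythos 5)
Your proposal is correct and follows essentially the same route as the paper: the corollary is stated there as an immediate consequence of the equality $OCAN_{\lambda}(2,m,2,v)=CAN_{\lambda}(2,m,v)$ (obtained from the column-duplication construction together with the trivial relation $CAN_{\lambda}(t,m,v)\leq OCAN_{\lambda}(t,m,t,v)$) combined with the Kleitman--Spencer characterization of $CAN(2,m,2)$. Your explicit care in citing both inequalities, rather than only the one-sided construction, is exactly the right reading of what the paper leaves implicit.
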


In the next result, we show an upper bound for the ordered covering
array number over an alphabet of size $v$ from the ordered covering
array number over an alphabet of size $v+1$. It generalizes
\cite[Lemma 3.1]{colbourn2008strength} and part of
\cite[Lemma 3.1]{colbourn2010covering}.

\begin{theorem}\label{fusion}
(Fusion Theorem) $OCAN_{\lambda}(t,m,s,v)\leq
OCAN_{\lambda}(t,m,s,v+1)-2$.
\end{theorem}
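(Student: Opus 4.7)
The plan is to adapt the standard fusion argument for covering arrays (as in the cited Colbourn references) to the OCA setting. Starting from an $OCA_{\lambda}(N;t,m,s,v+1)$ with $N=OCAN_{\lambda}(t,m,s,v+1)$ over the alphabet $\{0,1,\ldots,v\}$, I will produce an $OCA_{\lambda}(N-2;t,m,s,v)$ in three steps: first, permute symbols within each column so that rows $1$ and $2$ take values only in $\{v-1,v\}$; second, fuse $v$ into $v-1$ to obtain an array over $\{0,\ldots,v-1\}$; third, delete the two normalized rows.

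For the normalization, in each column $c$ independently, choose a permutation of $\{0,\ldots,v\}$ that sends the entry of row $1$ to $v$. If the row-$2$ entry at column $c$ originally agreed with the row-$1$ entry, then row $2$ of column $c$ becomes $v$; otherwise, the permutation can be chosen to send the row-$2$ entry to $v-1$. Independent per-column symbol permutations preserve $\lambda$-coverage of every $t$-column subarray, because they only relabel the multiset of $t$-tuples in that subarray without altering multiplicities; in particular, the defining property of the OCA, that every anti-ideal of size $t$ has its columns $\lambda$-covered, is preserved. After fusion, rows $1$ and $2$ become constant equal to $v-1$, and we delete them.

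To verify that the $(N-2)$-row array is an $OCA_{\lambda}(N-2;t,m,s,v)$, fix an anti-ideal $J$ of $\Omega[m,s]$ with $|J|=t$ and a target tuple $\tau\in\{0,\ldots,v-1\}^{t}$; let $k$ be the number of entries of $\tau$ equal to $v-1$. In the fused $N$-row array, $\tau$ appears in the columns of $J$ as the image under fusion of the $2^{k}$ tuples in $\{0,\ldots,v\}^{t}$ obtained by replacing each $v-1$ in $\tau$ by either $v-1$ or $v$, and each such preimage occurred at least $\lambda$ times in the original, so $\tau$ occurs at least $2^{k}\lambda$ times. If $k<t$, the two deleted rows display the all-$(v-1)$ tuple in $J$ and therefore do not match $\tau$, leaving at least $2^{k}\lambda\geq\lambda$ occurrences. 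If $k=t$, the deletion removes exactly two occurrences, leaving $2^{t}\lambda-2\geq\lambda$, which holds since $t\geq 2$ and $\lambda\geq 1$ imply $(2^{t}-1)\lambda\geq 3\geq 2$.

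The only real subtlety is confirming that the independent per-column symbol permutations respect the OCA's anti-ideal structure, and this is immediate since the OCA property is phrased purely in terms of the multiset of $t$-tuples in a $t$-column subarray; no other step poses a genuine obstacle, and the argument should go through verbatim for any $t\geq 2$ and any $\lambda\geq 1$.
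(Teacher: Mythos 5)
Your proof is correct, and it takes a recognizably different route through the fusion argument than the paper does. The paper normalizes a single row to be constantly $v+1$, deletes it, and then designates a second row $r$ whose entries serve as column-dependent replacement targets for the remaining occurrences of $v+1$; after deleting $r$ as well, the verification is compressed into the single remark that every $t$-tuple covered by $r$ is still covered by the modified rows. You instead normalize two rows so that both collapse to the all-$(v-1)$ row under a uniform symbol fusion $v\mapsto v-1$, and then justify the two deletions by an explicit preimage count: a target tuple with $k$ coordinates equal to $v-1$ has $2^{k}$ distinct preimages, hence at least $2^{k}\lambda$ occurrences after fusion, and the two deleted rows can only harm the all-$(v-1)$ tuple, for which $2^{t}\lambda-2\geq\lambda$ when $t\geq 2$. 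What your version buys is precisely the counting that the paper leaves implicit (the paper's one-line justification also ultimately rests on the fact that $\pi_J(r)$ has $2^{t}$ preimage patterns among the unmodified rows, of which only rows $1$ and $r$ are lost), together with a clean identification of where the hypothesis $t\geq 2$ enters; the paper's version buys brevity and stays closer to the classical statement of Colbourn's Lemma 3.1. Your observation that per-column symbol permutations preserve $\lambda$-coverage of every anti-ideal's column set is the right (and sufficient) justification for the normalization step.
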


\begin{proof}
Let $V=\{1,\ldots, v+1\}$ be an alphabet of size $v+1$ and consider an $OCA_{\lambda}(N;t,m,s,v+1)$ over $V$.
The permutation of the entries of any column of the $OCA_{\lambda}(N;t,m,s,v+1)$ still produces an ordered covering array with the same parameters.
If necessary, applying a permutation in each of the columns, we can guarantee that there exists a row in $OCA_{\lambda}(N;t,m,s,v+1)$
such that all the entries are $v+1$. We delete this row.

Choose a second row $r=(c_{1},\ldots,c_{ms})$ of $OCA_{\lambda}(N;t,m,s,v+1)$. In every row except $r$, where there exist an entry $v+1$ in column
$i$, replace $v+1$ by $c_{i}$ if $c_{i}\neq v+1$, otherwise, replace
$v+1$ by any element of $\{1,\ldots,v\}$. Delete row $r$.

The array $A$ obtained by deleting these two rows
of $OCA_{\lambda}(N;t,m,s,v+1)$ is an $OCA_{\lambda}(N-2;t,m,s,v)$, since each $t$-tuple that was covered by row $r$ is now covered by one or more of the modified rows.
\hfill $\Box$
\end{proof}

By \cite[Theorem 3]{castoldi2017ordered} and Theorem \ref{fusion}, we derive an upper bound on the ordered covering array number.

\begin{corollary}\label{corol3}
Let $q$ be a prime power. Then, $OCAN(t,q+1,t,q-1)\leq q^{t}-2$.
\end{corollary}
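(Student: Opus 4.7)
The plan is to combine the two cited ingredients in the most direct way possible: obtain an upper bound on $OCAN(t,q+1,t,q)$ from the construction of \cite[Theorem 3]{castoldi2017ordered}, then apply the Fusion Theorem once to drop the alphabet size from $q$ to $q-1$ at a cost of two rows.

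First, I would invoke \cite[Theorem 3]{castoldi2017ordered}, which for $q$ a prime power asserts the existence of an ordered orthogonal array $OOA(q^t;t,q+1,t,q)$ (this is the natural construction giving $m = q+1$ for an alphabet of size $q$, analogous to the classical MDS construction of orthogonal arrays from Reed-Solomon/Singleton-type codes). Since every OOA is in particular an OCA with $\lambda = 1$, this immediately yields
\[
OCAN(t,q+1,t,q) \leq q^{t}.
\]

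Next, I would apply the Fusion Theorem (Theorem \ref{fusion}) with parameters $\lambda=1$, $m=q+1$, $s=t$, and $v=q-1$. The theorem gives
\[
OCAN(t,q+1,t,q-1) \leq OCAN(t,q+1,t,q) - 2,
\]
and chaining this with the previous inequality produces the desired bound $OCAN(t,q+1,t,q-1) \leq q^{t}-2$.

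There is no real obstacle here — both inputs are off-the-shelf. The only thing worth verifying is that the hypotheses of the Fusion Theorem are satisfied, namely that an $OCA(q^{t};t,q+1,t,q)$ over an alphabet of size $v+1 = q$ genuinely exists so that the row-deletion argument in its proof can be executed; this is precisely what \cite[Theorem 3]{castoldi2017ordered} provides. One could also note that the constraint $s \leq t$ assumed throughout Section \ref{sec2} is satisfied with equality ($s=t$), so the statement is non-vacuous.
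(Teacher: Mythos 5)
Your proposal is correct and is exactly the paper's argument: the corollary is stated as an immediate consequence of \cite[Theorem 3]{castoldi2017ordered} (which supplies the $OOA(q^t;t,q+1,t,q)$, hence $OCAN(t,q+1,t,q)\leq q^t$) followed by one application of the Fusion Theorem with $v=q-1$. Nothing is missing.
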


\section{Constructions of covering codes using covering arrays}\label{sec4}

In this section, ordered covering arrays are used to construct
covering codes in RT spaces yielding upper bounds on their size. Theorems \ref{teo1}
and \ref{teoca}  are a generalization of results
already discovered for covering codes in Hamming spaces connected
with surjective matrices \cite{cohen1997covering}.

Let $I=\{i_{1},\ldots,i_{k}\}$ be a subset of $[n]=\{1,\ldots,n\}$.
Given an element $x=(x_{1},\ldots,x_{n})\in \mathbb{Z}_{q}^{n},$
the projection of $x$ with respect to $I$ is the element
$\pi_{I}(x)=(x_{i_{1}},\ldots,x_{i_{k}})\in \mathbb{Z}_{q}^{k}$.
More generally, for a non-empty subset $C$ of $\mathbb{Z}_{q}^{n}$,
the projection of $C$ with respect to $I$ is the set
$\pi_{I}(C)=\{\pi_{I}(c): c \in C\}$.

In \cite[Theorem 13]{castoldi2015covering}, it was proved that
$K_{q}^{RT}(m,s,ms-t)=q$ if  \linebreak $m\geq (t-1)q+1$. What can we say
about $K_{q}^{RT}(m,s,ms-t)$ when $m=(t-1)q$? In this direction,
we have the following result. A reference for item $(1)$ is
\cite{cohen1997covering}; the other two items are original
results of the present paper.

\begin{theorem}\label{teo1}
For $t\geq 2$,
\begin{enumerate}
\item[(1)] $K_{q}((t-1)q,(t-1)q-t)\leq q-2+CAN(t,(t-1)q,2).$
\item[(2)] $K_{q}^{RT}((t-1)q,s,(t-1)qs-t)\leq K_{q}((t-1)q,(t-1)q-t)$.
\item[(3)] $K_{q}^{RT}((t-1)q,s,(t-1)qs-t)\leq q-2+CAN(t,(t-1)q,2).$
\end{enumerate}
\end{theorem}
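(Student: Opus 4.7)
The plan is to handle the three parts in order, with (3) being an immediate combination of (1) and (2). For part (1), which is classical and attributed to Cohen et al., I would take the $q-2$ constant codewords $(a,a,\ldots,a)$ for $a \in \{2,\ldots,q-1\}$, augmented by the rows of a binary covering array $CA(N;t,(t-1)q,2)$ with $N = CAN(t,(t-1)q,2)$, regarded as words in $\{0,1\}^{n} \subseteq \mathbb{Z}_q^{n}$ with $n=(t-1)q$. To check the covering property, split on whether some symbol in $\{2,\ldots,q-1\}$ appears at least $t$ times in a target $x$: if so, a constant codeword covers $x$; if not, the count of $\{0,1\}$-entries in $x$ is at least $n-(q-2)(t-1) = 2(t-1) \geq t$, and the CA supplies a binary row matching $x$ on $t$ such positions.

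For part (2), my plan is to lift a minimum Hamming $(m-t)$-covering code $C\subseteq \mathbb{Z}_q^{m}$ with $m=(t-1)q$ to an RT $(ms-t)$-covering code by constant-on-blocks expansion: for each $c=(c_1,\ldots,c_m)\in C$, define $c'\in \mathbb{Z}_q^{ms}$ by $c'_j = c_i$ whenever $j\in B_i$. Given $x\in \mathbb{Z}_q^{ms}$, form the top vector $y\in \mathbb{Z}_q^{m}$ with $y_i = x_{\max(B_i)}$ and use the Hamming covering property to produce $c\in C$ agreeing with $y$ on some $t$ indices $\{i_1,\ldots,i_t\}$. The lifted codeword $c'$ then agrees with $x$ on $M = \{\max(B_{i_1}),\ldots,\max(B_{i_t})\}$.

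The key technical step is to argue that $M$ is an anti-ideal of the RT poset $[m\times s]$, so this agreement bounds the RT distance. If $b\in P\setminus M$ and $a\preceq b$, then $a,b$ lie in a common block $B_k$ with $a\leq b$; were $a$ a maximum of some $B_{i_j}$, the chain structure forces $b=a\in M$, contradicting $b\notin M$. Thus $P\setminus M$ is an ideal and $M$ is an anti-ideal of size $t$. Since the support $S$ of $x-c'$ avoids $M$, the same maximality argument applied to any $b\in S$ gives $\langle S\rangle \cap M=\emptyset$, whence $|\langle S\rangle| \leq ms-t$ and $d_{RT}(x,c')\leq ms-t$. Part (3) is then immediate by substituting the bound from (1) into (2).

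The main obstacle, as I see it, is not the case analysis in (1) but rather pinpointing in (2) exactly which feature of the RT poset permits a Hamming-to-RT reduction. The essential observation is that the maxima of any chosen $t$ blocks form an anti-ideal; it is precisely this that converts $t$-coordinate Hamming agreement into RT distance at most $ms-t$. Any lift that did not align the codeword values with the block maxima would fail to yield an anti-ideal of agreement, and the bound would break down.
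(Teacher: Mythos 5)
Your proposal is correct and, for part (2), takes essentially the same route as the paper: project $x$ onto the set of block maxima, apply the Hamming covering property there, and use the fact that any set of $t$ maximal elements is an anti-ideal (equivalently, its complement is an ideal containing the support of $x-c'$) to conclude $d_{RT}\le ms-t$; the only cosmetic difference is that the paper zero-fills the non-maximal coordinates of the lifted codewords while you repeat the block value, which is immaterial since only agreement at the maxima is used. You also supply a full argument for part (1) (constant words for the symbols $2,\dots,q-1$ plus the rows of a binary strength-$t$ covering array, with the count $n-(q-2)(t-1)=2(t-1)\ge t$), which the paper merely cites to Cohen et al.; that argument is correct.
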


\begin{proof} (Sketch.) We only prove part $(2)$ here; part
$(3)$ is straightforward from (1) and (2).
Let $M$ be the set of all maximal elements of the RT poset $[(t-1)q\times s]$ and
$C'$ be a $((t-1)q-t)$-covering code of the Hamming space $\mathbb{Z}_{q}^{(t-1)q}$.
Let $C$ be the subset of $\mathbb{Z}_{q}^{(t-1)qs}$ such that $c\in C$ if and only if
$\pi_{M}(c) \in C'$ and all the other coordinates of $c$ are equal to zero.
Given $x\in \mathbb{Z}_{q}^{(t-1)qs}$, let $\pi_{M}(x)\in \mathbb{Z}_{q}^{(t-1)q}$. Since
$C'$ is a $((t-1)q-t)$-covering of the Hamming space $\mathbb{Z}_{q}^{(t-1)q}$,
there exists $c'\in C'$  such that $\pi_{M}(x)$ and $c'$ coincide in at least $t$ coordinates.
Let $c\in C$ such that $\pi_{M}(c)=c'$.
Therefore $d_{RT}(x,c)\leq  (t-1)qs-t$, and $C$ is a $((t-1)qs-t)$-covering of the RT space $\mathbb{Z}_{q}^{(t-1)qs-t}$.
\hfill $\Box$
\end{proof}

Applying the trivial bounds we have that
$3\leq K_{3}^{RT}(3,s,3s-2)\leq 9$. The upper bound can be
improved by Theorem \ref{teo1}.

\begin{corollary}
$K_{3}^{RT}(3,s,3s-2)\leq 5.$
\end{corollary}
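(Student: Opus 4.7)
The plan is to simply instantiate Theorem \ref{teo1}, item (3), at the parameters $q=3$ and $t=2$. With these choices we have $(t-1)q = 3$, so the left-hand side becomes exactly $K_3^{RT}(3,s,3s-2)$ and the bound reads
\[
K_3^{RT}(3,s,3s-2) \;\leq\; 3 - 2 + CAN(2,3,2) \;=\; 1 + CAN(2,3,2).
\]
Thus the corollary reduces to the computation $CAN(2,3,2) = 4$.

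The equality $CAN(2,3,2)=4$ is immediate from the Kleitman--Spencer characterization cited earlier in the paper: $CAN(2,m,2)$ is the smallest $N$ with $m \leq \binom{N-1}{\lfloor N/2\rfloor - 1}$. For $m=3$ one checks $N=3$ gives $\binom{2}{0}=1 < 3$ while $N=4$ gives $\binom{3}{1}=3 \geq 3$, so $CAN(2,3,2)=4$. Alternatively, one can simply exhibit the $4 \times 3$ binary array whose rows are $000,\,011,\,101,\,110$ (each pair of columns contains all four ordered pairs) and match it against the obvious lower bound of $4$ coming from the requirement that any two columns realize all four pairs.

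Combining the two steps yields $K_3^{RT}(3,s,3s-2) \leq 1+4=5$, as claimed. There is no real obstacle here: the work has been done in Theorem \ref{teo1}, and the only quantitative input is the small covering array number $CAN(2,3,2)$, which is a classical and easy computation.
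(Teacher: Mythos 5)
Your proposal is correct and follows exactly the paper's own route: apply Theorem \ref{teo1} item (3) with $q=3$, $t=2$ to get $K_{3}^{RT}(3,s,3s-2)\leq 1+CAN(2,3,2)$, and then use $CAN(2,3,2)=4$ from the Kleitman--Spencer characterization. Your additional explicit verification of $CAN(2,3,2)=4$ (the $4\times 3$ array with rows $000,011,101,110$ and the trivial lower bound) is a harmless elaboration of what the paper simply cites.
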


\begin{proof}
Theorem \ref{teo1} yields $K_{3}^{RT}(3,s,3s-2)\leq 1+ CAN(2,3,2)$.
On the other hand, $CAN(2,3,2)=4$, according to
\cite{kleitman1973families}, and the upper bound follows.
\hfill $\Box$
\end{proof}

MDS codes have been used to improve upper bounds on $K_{q}(n,R)$
\cite{blokhuis1984more,carnielli1985covering,cohen1997covering}.
In \cite[Theorem 30]{castoldi2015covering}, MDS codes in RT spaces
are used to improve upper bounds for $K_{q}^{RT}(m,s,R)$. We
generalize these results using ordered covering arrays.

\begin{theorem}\label{teoca}
$K_{vq}^{RT}(m,s,R)\leq OCAN(ms-R,m,s,v) K_{q}^{RT}(m,s,R).$
\end{theorem}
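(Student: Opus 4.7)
The plan is to build a product-style covering code of $\mathbb{Z}_{vq}^{ms}$ by pairing each row of an OCA over an alphabet of size $v$ with each codeword of a minimum $R$-covering of $\mathbb{Z}_q^{ms}$. Set $t = ms - R$, $N = OCAN(t,m,s,v)$ and $K = K_q^{RT}(m,s,R)$. I would fix an $OCA(N;t,m,s,v)$ called $A$ over an alphabet $V$ of size $v$, whose columns are labelled by the RT poset $P = [m\times s]$, and fix an $R$-covering $C'\subseteq \mathbb{Z}_q^{ms}$ with $|C'|=K$. Using a bijection $\mathbb{Z}_{vq} \leftrightarrow V \times \mathbb{Z}_q$, every $y\in \mathbb{Z}_{vq}^{ms}$ decomposes coordinatewise as $y = (y^{(1)}, y^{(2)})$ with $y^{(1)}\in V^{ms}$ and $y^{(2)}\in \mathbb{Z}_q^{ms}$.

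Then I would set $C = \{(r,c') : r \text{ is a row of } A,\ c'\in C'\}\subseteq \mathbb{Z}_{vq}^{ms}$, which has at most $NK$ elements, and verify that $C$ is an $R$-covering. Given $x = (x^{(1)}, x^{(2)})$, first choose $c'\in C'$ with $d_{RT}(x^{(2)},c')\leq R$. Writing $D = \{i: x^{(2)}_i \neq c'_i\}$, the RT distance bound says $|\langle D\rangle|\leq R$, so $J := P \setminus \langle D\rangle$ is an anti-ideal of size at least $t$ on which $x^{(2)}$ and $c'$ agree.

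The key step is to shrink $J$ to an anti-ideal $J'\subseteq J$ of size exactly $t$, since the OCA definition demands anti-ideals of exactly that size. I would appeal to the easy lemma that removing a minimal element of an anti-ideal leaves an anti-ideal (upward-closedness of $J\setminus\{a\}$ is immediate when $a$ is minimal in $J$), so one strips minimal elements one at a time until $|J'|=t$. With $J'$ in hand, the OCA property guarantees that some row $r$ of $A$ satisfies $r_i = x^{(1)}_i$ for all $i\in J'$. The codeword $c := (r,c')\in C$ then agrees with $x$ on $J'$, so $\{i : x_i\neq c_i\}\subseteq P\setminus J'$, which is an ideal of size $R$; hence $d_{RT}(x,c)\leq R$.

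The only delicate part I anticipate is the shrinking step, where one must be sure that an anti-ideal of an RT poset always contains sub-anti-ideals of every smaller size and that passing from $J$ to $J'\subseteq J$ preserves the coordinatewise agreement between $x^{(2)}$ and $c'$ (which is automatic). The rest is a clean translation between RT distance, ideals, anti-ideals, and the coverage property of $A$.
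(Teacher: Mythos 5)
Your proof is correct and follows essentially the same route as the paper: the same product construction pairing OCA rows with codewords of a minimum $R$-covering, and the same key step of locating a size-$(ms-R)$ anti-ideal on which the OCA guarantees agreement. The only (cosmetic) difference is that you shrink the agreement anti-ideal to size exactly $ms-R$ by stripping minimal elements, whereas the paper dually extends the ideal $\langle supp(y-h)\rangle$ to an ideal $I'$ of size exactly $R$ and takes its complement.
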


\begin{proof}
Throughout this proof, the set $\mathbb{Z}_{vq}$ is regarded as the
set $\mathbb{Z}_{vq}=\mathbb{Z}_{v}\times \mathbb{Z}_{q}$ by setting the
bijection $xq+y \rightarrow (x,y)$. This strategy allows us to
analyze the information on the coordinates $x$ and $y$ separately.

Let $H$ be an $R$-covering of the RT space $\mathbb{Z}_{q}^{ms}$, and let
$C$ be the set of the rows of an $OCA(N;ms-R,m,s,v)$. We show that
$$G=\{((c_{1},h_{1}),\ldots,(c_{ms},h_{ms}))\in \mathbb{Z}_{vq}^{ms}:(c_{1},\ldots,c_{ms}) \in C, \ (h_{1},\ldots,h_{ms}) \in H\}$$
is a $R$-covering of the RT space $\mathbb{Z}_{vq}^{ms}$.

Indeed, for
$z=((x_{1},y_{1}),\ldots,(x_{ms},y_{ms}))\in \mathbb{Z}_{vq}^{ms}$,
let $x=(x_{1},\ldots,x_{ms})$ in $\mathbb{Z}_{v}^{ms}$ and $y=(y_{1},\ldots,y_{ms})$ in $\mathbb{Z}_{q}^{ms}$.
Since $H$ is an $R$-covering of the RT space $\mathbb{Z}_{q}^{ms}$,
for $y \in \mathbb{Z}_{q}^{ms}$ there exists $h=(h_{1},\ldots,h_{ms})
\in H$ such that $d_{RT}(y,h)\leq R$. Let $I$ be the ideal generated by $supp(y-h)$ and $I'$ be an ideal of the RT poset $[m\times s]$
of size $R$ such that $I\subseteq I'$. Then there exists a codeword
$c=(c_{1},\ldots,c_{ms})$ in $C$ such that $x$ and $c$ coincide in
all coordinates of the complementary set of $I'$ (which is an anti-ideal
of size $ms-R$). Thus $supp(x-c)\subseteq I'$.

Let $g=((c_{1},h_{1}),\ldots,(c_{ms},h_{ms}))$ in $G$. By
construction, $z$ and $g$ coincide in all coordinates of the complementary set of $I'$. Therefore,
$d_{RT}(z,g)=|\langle supp(z-g)\rangle |\leq |I'|=R$ and
the proof is complete.
\hfill $\Box$
\end{proof}

Together with \cite[Theorem 13]{castoldi2015covering} we get the following consequences of Theorem \ref{teoca}.

\begin{corollary}
\begin{itemize}
\item[(1)] For $q< m \leq 2q$ and $2\leq s \leq 3$, $K_{2q}^{RT}(2m,s,2ms-3)\leq q (OCAN(3,m,s,2)+CAN(2,m,2)).$
\item[(2)] If $(t-1)q+1\leq m \leq (t-1)qv$, then $K_{qv}^{RT}(m,s,ms-t)\leq q OCAN(t,m,s,v)$.
\end{itemize}
\end{corollary}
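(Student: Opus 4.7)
The plan is to combine Theorem~\ref{teoca} with \cite[Theorem 13]{castoldi2015covering}, which states that $K_q^{RT}(m',s,m's-t)=q$ whenever $m' \geq (t-1)q+1$.

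\textbf{Part (2)} is immediate: the hypothesis $m \geq (t-1)q+1$ gives $K_q^{RT}(m,s,ms-t)=q$, and Theorem~\ref{teoca} with alphabet size $vq$ and radius $R=ms-t$ (so $ms-R=t$) yields $K_{qv}^{RT}(m,s,ms-t) \leq OCAN(t,m,s,v)\cdot q$.

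For \textbf{Part (1)}, the hypothesis $m \geq q+1$ implies $2m \geq 2q+1 = (3-1)q+1$, so \cite[Theorem 13]{castoldi2015covering} (applied to $2m$ with $t=3$) gives $K_q^{RT}(2m,s,2ms-3)=q$. Theorem~\ref{teoca} with $v=2$ and $R=2ms-3$ (so $ms-R=3$) then yields
\[
K_{2q}^{RT}(2m,s,2ms-3) \leq q\cdot OCAN(3,2m,s,2),
\]
so it suffices to prove the doubling inequality
\[
OCAN(3,2m,s,2) \leq OCAN(3,m,s,2) + CAN(2,m,2).
\]
I would prove this by an explicit construction. Let $A$ be an optimal $OCA(N_1;3,m,s,2)$ whose column RT poset $\Omega[m,s]$ has blocks $B_0,\ldots,B_{m-1}$; let $C$ be an optimal $CA(N_2;2,m,2)$; and let $\tau$ be a derangement of $\{0,\ldots,m-1\}$. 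Build an $(N_1+N_2)\times 2ms$ array $D$ with columns labeled by $\Omega[2m,s]$, with blocks $D_0,\ldots,D_{2m-1}$: in the first $N_1$ rows, set both $D_i$ and $D_{m+i}$ equal to block $B_i$ of $A$ for each $0\leq i<m$ (duplicating $A$ across the two halves); in row $N_1+r$ ($1\leq r\leq N_2$) and for each $0\leq i<m$, place $C[r,i]$ at the maximum position of $D_i$ and $C[r,\tau(i)]$ at its non-maximum positions, and place $1-C[r,i]$ at the maximum position of $D_{m+i}$ and $C[r,\tau(i)]$ at its non-maximum positions.

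\textbf{The main obstacle} is the case analysis showing that every size-3 anti-ideal $J$ of $\Omega[2m,s]$ has all $2^3=8$ binary triples realized in the corresponding $(N_1+N_2)\times 3$ sub-array of $D$. The size-3 anti-ideals come in three types: (a) the top 3 of a single block (only when $s=3$); (b) the top 2 of one block together with the max of another; and (c) three distinct maxes. Each type splits into sub-cases according to whether $J$ lies in a single half of $\Omega[2m,s]$, straddles the halves at blocks that are not paired, or involves a paired pair $(D_i, D_{m+i})$. The first two sub-cases reduce to the strength-3 of $A$ on $\Omega[m,s]$ and are handled by the $N_1$ rows alone. The paired sub-case is the crux: the duplication forces the $N_1$ rows to contribute only the 4 triples of shape $(a,b,b)$ (up to a fixed permutation), and one must verify that the $N_2$ rows supply the complementary 4 triples of shape $(c,d,1-d)$. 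This follows because the derangement $\tau$ together with the strength-2 of $C$ ensures $(C[r,i],C[r,\tau(i)])$ attains all four binary values as $r$ varies, while the complementation $1-C[r,i]$ at the paired maxima produces the required $1-d$ coordinate, completing the coverage.
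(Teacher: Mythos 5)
Your proposal is correct, and the skeleton is exactly the paper's: both parts follow from Theorem~\ref{teoca} combined with \cite[Theorem 13]{castoldi2015covering} (the hypotheses $q<m$, resp.\ $(t-1)q+1\leq m$, are precisely what make $K_q^{RT}(2m,s,2ms-3)=q$, resp.\ $K_q^{RT}(m,s,ms-t)=q$; the upper bounds on $m$ only delimit the range where the new bound beats applying Theorem 13 to the larger alphabet directly). The one place you go beyond the paper is the doubling inequality $OCAN(3,2m,s,2)\leq OCAN(3,m,s,2)+CAN(2,m,2)$: the paper leaves this implicit, relying on the Roux-type OCA constructions it attributes to Krikorian~\cite{krikorian2011combinatorial}, whereas you prove it from scratch. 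Your ordered Roux construction checks out: the three shapes of size-$3$ anti-ideals (top $3$ of a block, top $2$ plus a maximum, three maxima) are exhaustive for $s\leq 3$; all sub-cases whose blocks project to distinct blocks of $\Omega[m,s]$ (or to a single block) are anti-ideals of $\Omega[m,s]$ and are handled by the $N_1$ rows; and in the paired sub-cases the $A$-rows give the four triples with equal values at the two paired maxima while the $C$-rows give $(C[r,\tau(i)],C[r,i],1-C[r,i])$ or $(C[r,i],1-C[r,i],\pm C[r,j])$, which by strength $2$ of $C$ and $\tau(i)\neq i$ supply the complementary four. The derangement at the non-maximal positions is indeed essential (reusing column $i$ there would leave two triples uncovered), so it is good that you included it. A cleaner alternative would simply be to cite the Roux-type result, but your self-contained argument is sound and arguably more informative.
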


If $OCAN(t,m,s,v)=v^{t}$, then there exists an ordered orthogonal
array $OOA(t,m,s,v)$. In the following results, ordered orthogonal
arrays are used to obtain upper bounds for covering codes in RT
spaces. Item $(1)$ has appeared in
\cite[Theorem 30]{castoldi2015covering}. Item $(2)$ is a consequence of Theorem \ref{teoca} and \cite[Theorem 3]{castoldi2017ordered}.

\begin{corollary} \label{corol4}
\begin{itemize}
\item[(1)] If there is an ordered orthogonal array $OOA(ms-R,m,s,v)$, then $K_{vq}^{RT}(m,s,R)$ $\leq v^{ms-R} K_{q}^{RT}(m,s,R).$
\item[(2)] Let $q$ be a prime power, $m\leq q+1$ and $s\leq t$. Then, we have $K_{qv}^{RT}(m,s,ms-t)$ $\leq q^{t} K_{v}^{RT}(m,s,ms-t).$
\item[(3)] Let $q$ be a prime power. For $t\geq 2$, we have $K_{(q-1)v}^{RT}(q+1,t,qt)\leq (q^t -2)K_{v}^{RT}(q+1,t,qt)$.
\end{itemize}
\end{corollary}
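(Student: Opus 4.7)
The plan is to derive all three items as direct consequences of Theorem \ref{teoca} (applied with suitable parameters), together with existence results for ordered (orthogonal) covering arrays already available in the paper and in \cite{castoldi2017ordered}.

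For item (1), I would observe that the hypothesis gives an $OOA(ms-R,m,s,v)$, i.e. an ordered covering array achieving $N=v^{t}$ with $t=ms-R$ and $\lambda=1$. Consequently $OCAN(ms-R,m,s,v)\le v^{ms-R}$. Plugging this into Theorem \ref{teoca} with strength $t=ms-R$ yields
\[
K_{vq}^{RT}(m,s,R)\le OCAN(ms-R,m,s,v)\,K_{q}^{RT}(m,s,R)\le v^{ms-R}K_{q}^{RT}(m,s,R),
\]
which is precisely the claimed bound.

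For item (2), the plan is to specialize item (1). Under the hypotheses $m\le q+1$, $s\le t$ and $q$ a prime power, \cite[Theorem 3]{castoldi2017ordered} guarantees the existence of an ordered orthogonal array $OOA(t,m,s,q)$. Taking $R=ms-t$ in item (1), with the alphabet sizes $v\leftrightarrow q$ swapped (so $v^{ms-R}=q^{t}$), gives
\[
K_{qv}^{RT}(m,s,ms-t)\le q^{t}K_{v}^{RT}(m,s,ms-t).
\]
The only thing to check is that the hypotheses $m\le q+1$ and $s\le t$ line up with the existence hypothesis for the OOA cited above, which they do by construction.

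For item (3), I would instead invoke Corollary \ref{corol3}, which states $OCAN(t,q+1,t,q-1)\le q^{t}-2$ for $q$ a prime power. Choosing $m=q+1$, $s=t$, $R=qt$, we have $ms-R=(q+1)t-qt=t$, so Theorem \ref{teoca} applied with alphabet $v$ and the OCA of strength $ms-R=t$ over an alphabet of size $q-1$ yields
\[
K_{(q-1)v}^{RT}(q+1,t,qt)\le OCAN(t,q+1,t,q-1)\,K_{v}^{RT}(q+1,t,qt)\le (q^{t}-2)K_{v}^{RT}(q+1,t,qt).
\]
There is no real obstacle here: each part is a one-line substitution once one identifies the correct strength $ms-R$ and matches the OCA/OOA existence result to apply. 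The main bookkeeping step is simply making sure the parameters $(t,m,s,v,R)$ align correctly between Theorem \ref{teoca} and the existence theorems being cited.
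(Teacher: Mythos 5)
Your proposal is correct and follows essentially the same route as the paper: item (1) is Theorem \ref{teoca} combined with the observation that an $OOA$ witnesses $OCAN(ms-R,m,s,v)\le v^{ms-R}$, item (2) plugs in the $OOA(t,m,s,q)$ guaranteed by \cite[Theorem 3]{castoldi2017ordered} (with the roles of the two alphabets swapped, as you note), and item (3) plugs in Corollary \ref{corol3}. The parameter bookkeeping ($ms-R=t$ when $m=q+1$, $s=t$, $R=qt$) is exactly right.
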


In order to get better upper bounds on $K_{vq}^{RT}(m,s,R)$, we
improve the upper bound on $K_{v}^{RT}(m,s,R)$ for suitable
values of $m$ and $R$. For this purpose, we look at a covering code that gives the trivial upper bound for $K_{v}^{RT}(m,s,R)$ and
modify some of its codewords to reduce the size of the covering code.

\begin{theorem}\label{twochains}
For $s\geq 2$,
\begin{itemize}
\item[(1)] $K_{v}^{RT}(2,s,s)\leq v^{s-2}(v^{2}-1)$,
\item[(2)] $K_{v}^{RT}(3,s,2s-1)\leq v(v^{s}-1)$.
\end{itemize}
\end{theorem}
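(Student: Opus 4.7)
I will prove both parts by exhibiting explicit covering codes with a ``diagonal'' structure.

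For part (1), I consider the code
\[
C_1 = \{(x, x) \in \mathbb{Z}_v^{2s} : x \in \mathbb{Z}_v^s,\ (x_1, x_s) \neq (0, 1)\},
\]
where $(x, x)$ denotes the vector in $\mathbb{Z}_v^{2s}$ whose $B_0$- and $B_1$-blocks both equal $x$. The forbidden set $S := \{x \in \mathbb{Z}_v^s : x_1 = 0,\ x_s = 1\}$ has $v^{s-2}$ elements, so $|C_1| = v^s - v^{s-2} = v^{s-2}(v^2 - 1)$. To verify that $C_1$ is an $s$-covering, the main step is a case analysis on $(y, z) \in \mathbb{Z}_v^{2s}$: if $y \notin S$, then $(y, y) \in C_1$ and its RT distance from $(y, z)$ is at most $s$ (the $B_0$-contribution is $0$), and symmetrically if $z \notin S$. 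The only interesting case is $y, z \in S$, which forces $y_s = z_s = 1$; there I set $x = (x_1, y_2, \ldots, y_s)$ for any $x_1 \in \mathbb{Z}_v \setminus \{0\}$, so that $x \notin S$ and hence $(x, x) \in C_1$. A direct computation shows that the $B_0$-contribution of $y - x$ equals $1$ and, crucially because $(z - x)_s = z_s - y_s = 0$, the $B_1$-contribution of $z - x$ is at most $s - 1$; the total is therefore at most $s$.

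For part (2), I consider
\[
C_2 = \{(\alpha e_s, y, y) \in \mathbb{Z}_v^{3s} : \alpha \in \mathbb{Z}_v,\ y \in \mathbb{Z}_v^s \setminus \{0\}\},
\]
where $e_s = (0, \ldots, 0, 1) \in \mathbb{Z}_v^s$ is the standard basis vector at the top of a length-$s$ chain. Then $|C_2| = v(v^s - 1)$. Given $(x, u, w) \in \mathbb{Z}_v^{3s}$, I choose $\alpha = x_s$, which makes the $B_0$-contribution to the distance at most $s - 1$. Then I pick $y$ to control the $B_1$- and $B_2$-contributions: take $y = u$ if $u \neq 0$, take $y = w$ if $u = 0 \neq w$, and take $y = (1, 0, \ldots, 0)$ if $u = w = 0$. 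In each of the first two cases one block contributes $0$ and the other at most $s$; in the last case each contributes $1$, giving total distance $(s - 1) + 1 + 1 = s + 1$, which is at most $2s - 1$ because $s \geq 2$. In every case the total RT distance is at most $2s - 1$.

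The main subtlety is in part (1), case $y, z \in S$: the construction of $x$ is forced by requiring $x_s = y_s = z_s = 1$, so that $(z - x)_s = 0$ and the $B_1$-contribution of the distance stays strictly below $s$; otherwise the total could reach $s + 1$ and exceed the covering radius. The case $u = w = 0$ in part (2) is the analogous borderline situation, where the choice $y = (1, 0, \ldots, 0)$ is designed to make each of the $B_1$- and $B_2$-contributions as small as possible while still respecting $y \neq 0$.
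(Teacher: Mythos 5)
Your proof is correct, and it takes a genuinely different route from the paper's. The paper starts from the trivial covering code $C=\{c:\pi_I(c)=0\}$ of size $v^{s}$ (resp.\ $v^{s+1}$) that realizes the bound $K_v^{RT}(m,s,R)\le v^{ms-R}$, partitions it into classes $C_z$ of size $v^2$ (resp.\ $v^s$) according to the values on certain high coordinates, and shows that each class can be pierced: one codeword is deleted and the survivors are modified by copying entries of a later block into the top of an earlier block, so that the deleted word's covering duties are reassigned. Your argument instead exhibits a closed-form ``diagonal'' code: in (1) the repetition code $\{(x,x)\}$ with the $v^{s-2}$ words having $x_1=0$, $x_s=1$ removed, and in (2) the code $\{(\alpha e_s,y,y):\alpha\in\mathbb{Z}_v,\ y\ne 0\}$. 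Both arguments exploit the same metric fact --- forcing agreement in the top coordinate of a chain caps that chain's contribution at $s-1$ --- and that is exactly how you absorb the extra unit of distance in your critical cases (for $y,z\in S$, the choice $x_s=y_s=z_s=1$ gives $B_0$-contribution $1$ and $B_1$-contribution at most $s-1$; for $u=w=0$, the total $s+1\le 2s-1$ uses $s\ge 2$). I checked that your case analyses are exhaustive, the distance computations are right, and the code sizes match $v^{s-2}(v^2-1)$ and $v(v^s-1)$ exactly. Your construction is arguably easier to state and verify, since the code is given explicitly rather than as a modification of another code; the paper's version has the advantage of exhibiting the bound as an explicit improvement of the trivial bound, obtained by deleting one word per class.
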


\begin{proof}
$(1)$ Let $I=\{1,\ldots,s\}$ ideal of the RT poset $[2\times s]$.
The trivial upper bound for $K_{v}^{RT}(2,s,s)$ is $v^{s}$, and
a $s$-covering of the RT space $\mathbb{Z}_{v}^{2s}$ of size $v^{s}$ is
$$C=\{c\in \mathbb{Z}_{v}^{2s}: \pi_{I}(c)=0 \in \mathbb{Z}_{v}^{s}\}.$$
For each $z\in \mathbb{Z}_{v}^{s-2}$, let
$$C_{z}=\{c\in C: \pi_{J}(c)=z \in \mathbb{Z}_{v}^{s-2}\},$$
where $J=\{s+3,\ldots,2s\}$ is an anti-ideal of the RT poset $[2\times s]$. The following properties hold:
\begin{itemize}
\item[(a)] $C_{z}\cap C_{z'}=\emptyset$ if and only if $z\neq z'$;
\item[(b)] $|C_{z}|=v^{2}$ for all $z\in \mathbb{Z}_{v}^{s-2}$;
\item[(c)] $C_{z}$ is a $s$-covering of the  RT space $\mathbb{Z}_{v}^{s+2}\times \{z\}$ over the RT poset $[2\times m]$;
\item[(d)] $C=\bigcup_{z\in \mathbb{Z}_{v}^{s-2}} C_{z}$.
\end{itemize}
For each $z\in \mathbb{Z}_{v}^{s-2}$, we construct a new set $A_{z}$ from $C_{z}$
of size $v^{2}-1$ such that $A_{z}$ is a $s$-covering of the RT space $\mathbb{Z}_{v}^{s+2}\times \{z\}$.

Let $C_{z}'=C_{z}\backslash \{(0;0,0;z)\}$. For each $c=(0;c_{s+1},c_{s+2};z)\in C_{z}'$ define
\[
\phi_{z}(c)= \left\{ \begin{array}{ll}
(0;c_{s+1},c_{s+2};z) & \textrm{if $c_{s+2}=0$},\\
(0;c_{s+1},c_{s+2};c_{s+1},c_{s+2};z) & \textrm{if $c_{s+2}\neq 0$}.
\end{array}\right.
\]
We show that
$A_{z}=\{\phi_{z}(c)\in \mathbb{Z}_{v}^{2s}: c\in C_{z}'\}$
is a $s$-covering of the RT space $\mathbb{Z}_{v}^{s+2}\times \{z\}$. 
Given $x=(x_{1},\ldots,x_{s};x_{s+1},x_{s+2};z)\in \mathbb{Z}_{v}^{s+2}\times \{z\}$,
we divide the proof into three cases.
\begin{itemize}
\item[(a)] If $x_{s+2}\neq 0$, then $x$ is covered by $(0;x_{s+1},x_{s+2};x_{s+1},x_{s+2};z)$;
\item[(b)] If $x_{s+1}\neq 0$ and $x_{s+2}=0$, then $x$ is covered by $(0;x_{s+1},0;z)$;
\item[(c)] If $x_{s+1}=x_{s+2}=0$, then we have two subcases:
\begin{itemize}
\item[(i)] If $x_{s}=0$, then $x$ is covered by $(0;a,0;z)$, where $a\neq 0$;
\item[(ii)] If $x_{s}\neq 0$, then $x$ is covered by $(0;x_{s-1},x_{s};x_{s-1},x_{s};z)$.
\end{itemize}
\end{itemize}

Therefore, the set $A=\bigcup_{z\in \mathbb{Z}_{v}^{s-2}}A_{z}$ is a $s$-covering of the RT
space $\mathbb{Z}_{v}^{2s}$ of size $v^{s-2}(v^{2}-1)$.

$(2)$ Let $I=\{1,\ldots,2s-1\}$ ideal of the RT poset $[3\times s]$. The trivial upper bound for
$K_{v}^{RT}(3,s,2s-1)$ is  $v^{s+1}$, and a $(2s-1)$-covering of the RT space $\mathbb{Z}_{v}^{3s}$ is
$$C=\{c\in \mathbb{Z}_{v}^{3s}: \pi_{I}(c)=0\in \mathbb{Z}_{v}^{2s-1} \}.$$
For each $z\in \mathbb{Z}_{v}$ let  $C_{z}=\{c\in C: \pi_{2s}(c)=z \}.$
The following properties hold:
\begin{itemize}
\item[(a)] $C_{z}\cap C_{z'}=\emptyset$ if and only if $z\neq z'$;
\item[(b)] $|C_{z}|=v^{s}$ for all $z\in \mathbb{Z}_{v}$;
\item[(c)] $C_{z}$ is a $(2s-1)$-covering of the RT space $\mathbb{Z}_{v}^{2s-1}\times \{z\}\times \mathbb{Z}_{v}^{s}$
over the RT poset $[3\times s]$;
\item[(d)] $C=\bigcup_{z\in \mathbb{Z}_{v}} C_{z}$.
\end{itemize}

For each $z\in \mathbb{Z}_{v}$, we construct a new set $A_{z}$ from $C_{z}$
of size $v^{s}-1$ such that $A_{z}$ is a $(2s-1)$-covering of the RT space
$\mathbb{Z}_{v}^{2s-1}\times \{z\}\times \mathbb{Z}_{v}^{s}$.

Let $C_{z}'=C_{z}\backslash \{(0\ldots 0; 0 \ldots 0 z; 0 \ldots 0)\}$. For each $c\in C_{z}'$ define
\[
\phi(c)= \left\{ \begin{array}{ll}
c & \textrm{if $c_{3s}=0$}\\
(c_{2s+1},\ldots,c_{3s};c_{s+1},\ldots, c_{2s};c_{2s+1},\ldots,c_{3s}) & \textrm{if $c_{3s}\neq 0$}.
\end{array}\right.
\]
We claim that
$A_{z}=\{\phi_{z}(c)\in \mathbb{Z}_{v}^{3s}: c\in C_{z}'\}$
is a $(2s-1)$-covering of the RT space $\mathbb{Z}_{v}^{2s-1}\times \{z\}\times \mathbb{Z}_{v}^{s}$. Indeed,
given $x=(x_{1},\ldots,x_{2s-1},z;x_{2s+1},\ldots,x_{3s})\in $
$\mathbb{Z}_{v}^{2s-1}\times \{z\}\times \mathbb{Z}_{v}^{s}$,
we divide the proof into two cases.
\begin{itemize}
\item[(a)] If $(x_{2s+1},\ldots,x_{3s})\neq (0 \ldots 0)$, then $c\in A_{z}$ such that
$\pi_{J}(c)=(x_{2s+1},\ldots,x_{3s})$ covers $x$, where $J=\{2s+1,\ldots,3s\}$;
\item[(b)] If $(x_{2s+1},\ldots,x_{3s})= (0 \ldots 0)$, then we have two subcases:
\begin{itemize}
\item[(i)] If $x_{s}=0$, then $x$ is covered by $(0,\ldots, 0; 0, \ldots, 0, z; z', 0 ,\ldots ,0)$, where $z'\neq 0$;
\item[(ii)] If $x_{s}\neq 0$, then $x$ is covered by $(x_{1},\ldots, x_{s}; 0, \ldots, 0, z; x_{1}, \ldots, x_{s})$.
\end{itemize}
\end{itemize}

Therefore, the set $A=\bigcup_{z\in \mathbb{Z}_{v}}A_{z}$ is a
$(2s-1)$-covering of the RT space $\mathbb{Z}_{v}^{3s}$ of size
$v(v^{s}-1)$.
\hfill $\Box$
\end{proof}

\begin{example}
A $3$-covering code of the RT space $\mathbb{Z}_{2}^{6}$ (RT poset
$[2\times 3]$) with $2^3$ is $C=C_{0}\cup C_{1}$, where
\[
C_{0}=\{000000, 000010, 000100, 000110\},
\]
\[
C_{1}=\{000001, 000101, 000011, 000111\}.
\]
Theorem \ref{twochains} item $(1)$ improves the upper bound
$K_{2}^{RT}(2,3,3)\leq 8$ by using the $3$-covering code
$A=A_{0}\cup A_{1}$, where
$A_{0}=\{000100, 001100, 010110\}$, and
$A_{1}=\{ 001001, 011101, 000011\}.$
Therefore, $K_{2}^{RT}(2,3,3)\leq 6$.
\hfill $\Box$
\end{example}

\begin{corollary}\label{corol5}
Let $q$ be a prime power.
\begin{itemize}
\item[(1)] For $q+1\leq (t-1)v$, $K_{qv}^{RT}(q+1,t,qt)\leq q^{t}v^{t-2}(v^{2}-1)$.
\item[(2)] For $m\leq q+1$, $K_{qv}^{RT}(m,s,ms-(s+1))\leq q^{s+1}v(v^{s}-1)$.
\item[(3)] $K_{(q-1)v}^{RT}(q+1,t,qt)\leq (q^{t}-2)v^{t-2}(v^{2}-1)$.
\end{itemize}
\end{corollary}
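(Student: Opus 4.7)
The plan is to derive each of the three items by combining Corollary~\ref{corol4} with Theorem~\ref{twochains}, bridged by a padding monotonicity principle that transfers bounds on $K_v^{RT}$ from few chains to many chains at the cost of enlarging the radius.

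First, I would establish the following padding lemma: for $m \geq m'$, $s \geq 1$, and $R' \geq 0$,
\[
K_v^{RT}(m, s, R' + (m-m')s) \leq K_v^{RT}(m', s, R').
\]
The proof takes an $R'$-covering $A$ of $\mathbb{Z}_v^{m's}$ over the RT poset $[m'\times s]$ and pads each codeword with $(m-m')s$ zeros to form $A' \subseteq \mathbb{Z}_v^{ms}$. Given $x \in \mathbb{Z}_v^{ms}$, split it as $(x_1, x_2)$ with respect to the first $m'$ blocks versus the remaining $m-m'$ blocks; choose $a \in A$ with $d_{RT}(a, x_1) \leq R'$ in $[m'\times s]$, and form $a' = (a, 0, \ldots, 0) \in A'$. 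Since $[m \times s]$ is a disjoint union of $m$ chains, the ideal $\langle supp(a' - x)\rangle$ decomposes as the union of $\langle supp(a - x_1)\rangle$, of size at most $R'$, inside the first $m'$ chains, together with an ideal of size at most $(m-m')s$ inside the remaining $m - m'$ chains. Hence $d_{RT}(a', x) \leq R' + (m - m')s$.

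With the padding lemma in hand, each item follows in two strokes. For (1), Corollary~\ref{corol4}(2) with $m=q+1$ and $s=t$ (so that $ms - t = qt$) gives $K_{qv}^{RT}(q+1, t, qt) \leq q^t K_v^{RT}(q+1, t, qt)$; then padding from $m'=2$ with $R'=t$, combined with Theorem~\ref{twochains}(1), yields $K_v^{RT}(q+1, t, qt) \leq K_v^{RT}(2, t, t) \leq v^{t-2}(v^2-1)$. For (2), Corollary~\ref{corol4}(2) with $t=s+1$ gives $K_{qv}^{RT}(m, s, ms-(s+1)) \leq q^{s+1} K_v^{RT}(m, s, ms-(s+1))$; padding from $m'=3$ with $R'=2s-1$ together with Theorem~\ref{twochains}(2) yields $K_v^{RT}(m, s, ms-(s+1)) \leq K_v^{RT}(3, s, 2s-1) \leq v(v^s-1)$. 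For (3), Corollary~\ref{corol4}(3) provides the prefactor $q^t - 2$ in place of $q^t$, after which the reduction is identical to that used for (1).

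The only nontrivial step is the padding lemma itself, and its verification is conceptually easy because every ideal in $[m\times s]$ is a disjoint union of ideals along its $m$ chains, so adjoining further zero-valued chains contributes at most $s$ elements per chain to the generated ideal. The hypothesis $q+1 \leq (t-1)v$ in item~(1) is not actually required for the inequality to hold; it simply demarcates the regime in which the bound is genuinely informative, i.e., not subsumed by the sharp equality $K_v^{RT}(q+1, t, qt) = v$ provided by \cite[Theorem 13]{castoldi2015covering} when $q \geq (t-1)v$.
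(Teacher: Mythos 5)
Your proof is correct and follows essentially the same route as the paper: each item is Corollary~\ref{corol4} combined with Theorem~\ref{twochains} via a reduction from $m$ chains down to $2$ or $3$ chains. The only difference is that your ``padding lemma'' is proved from scratch, whereas the paper simply cites it as \cite[Proposition 17]{castoldi2015covering}; your proof of it (zero-padding the extra chains and using the fact that ideals of a disjoint union of chains decompose chainwise) is correct, and your closing remark about the role of the hypothesis $q+1\leq (t-1)v$ is also accurate.
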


\begin{proof}
$(1)$ Applying \cite[Proposition 17]{castoldi2015covering} with
$n=q-1$,
$$K_{v}^{RT}(q+1,t,qt)\leq K_{v}^{RT}(2,t,t).$$
Theorem \ref{twochains} item $(1)$ yields
$K_{v}^{RT}(q+1,t,qt)\leq v^{t-2}(v^{2}-1)$.
The result follows by Corollary \ref{corol4} item $(2)$.

$(2)$ Theorem \ref{teoca} shows that
$$K_{qv}^{RT}(m,s,ms-(s+1))\leq OCAN(s+1,m,s,q)K_{v}^{RT}(m,s,ms-(s+1)).$$
Since there exists an $OOA(s+1,q+1,s+1,q)$ then there exists an
$OOA(s+1,m,s,q)$ for $m\leq q+1$. Applying Theorem \ref{twochains}
item $(2)$ and \cite[Proposition 17]{castoldi2015covering} with
$n=m-2$,
$$K_{v}^{RT}(m,s,ms-(s+1))\leq K_{v}^{RT}(3,s,2s-1)\leq v(v^s-1).$$
Therefore, the upper bound desired is attained.

$(3)$ Applying \cite[Proposition 17]{castoldi2015covering} with
$n=q-1$,
$$K_{v}^{RT}(q+1,t,qt)\leq K_{v}^{RT}(2,t,t).$$
Theorem \ref{twochains} item $(1)$ implies that
$K_{v}^{RT}(q+1,t,qt)\leq v^{t-2}(v^2 -1)$. Corollary
\ref{corol4} item $(3)$ completes the proof.
\hfill $\Box$
\end{proof}

We compare the upper bounds for $K_{qv}^{RT}(q+1,t,qt)$. By
Theorem \ref{twochains} item $(1)$, $(qv)^{t-2}((qv)^{2}-1)$
is an upper bound for $K_{qv}^{RT}(q+1,t,qt)$. However,
Corollary \ref{corol5} item $(1)$ yields the upper bound
$q^{t}v^{t-2}(v^{2}-1)$ which improves the one given by
Theorem \ref{twochains} item $(1)$.

\bigskip
\noindent
{\bf Acknowledgements.}
The authors would like to thank the anonymous referees for their suggestions that greatly improved this paper.

A.G. Castoldi was supported by the CAPES of Brazil, Science without Borders Program, under Grant 99999.003758/2014-01.
E.L. Monte Carmelo is partially supported by CNPq/MCT grants: 311703/2016-0.
The last three authors are supported by discovery grants from NSERC of Canada.

%
%
%

\end{document}